\newif\iffull
\newtheorem{theorem}{Theorem}
\newtheorem{corollary}{Corollary}
\newtheorem{lemma}{Lemma}
\newcommand{\calL}{{\ensuremath{\cal L}}}
\newcommand{\calC}{{\ensuremath{\cal C}}}
\renewcommand{\L}{{\ensuremath{\mathtt L}}}
\newcommand{\flipL}{\protect\reflectbox{\L}}
\newcommand{\leaveout}[1]{}
\date{}
\title{Segment representations with small resolution}
\author{Therese Biedl\thanks{David R.~Cheriton School of Computer
Science, University of Waterloo, Waterloo, Ontario N2L 1A2, Canada.
Supported by NSERC.  The author would like to thank Daniel
Gon\c{c}alvez and Cornelia Schattman for helpful input.}
}
\begin{document}

\maketitle
\begin{abstract}
A {\em segment representation} of a graph is an assignment of
line segments in 2D to the vertices in such a way that two
segments intersect if and only if the corresponding vertices are adjacent.
Not all graphs have such segment representations, but they exist,
for example, for all planar graphs.

In this note, we study the resolution that can be achieved for
segment representations, presuming the ends of segments must
be on integer grid points.  We show that any planar graph
(and more generally, any graph that has a so-called $\L$-representation)
has a segment representation in a grid of width and height $4^n$.
\end{abstract}

%\linenumbers

%%%%%%%%%%%%%%%%%%%%%%%%%%%%%%%%%%%%%%%%%%%%%%%%%%%%%%%%%%%%%%%%%%%%%%%%
\section{Introduction}

\iffull
The problems {\sc String} and {\sc Segment} have received much
attention in recent history.  Here, {\sc String} refers to the
problem of, given a graph $G=(V,E)$, testing whether we can assign 
a curve $\calC(v)$ in 2D
to each vertex $v$ in such a way that $\calC(v)$ and $\calC(w)$ intersect
if and only if $(v,w)$ is an edge in the graph.  Such an assignment
is called a {\em string representation}.    This problem is
NP-hard \cite{Kra91} and also in NP \cite{SSS03}; the latter is
not at all obvious because string representations sometimes require
an exponential number of crossings among strings \cite{KM91}.

The {\sc Segment} problem is a special case of the {\sc String}
problem, where we want a {\em segment
representation}, i.e., $\calC(v)$ is a line
segment.  The NP-hardness proof of \cite{Kra91} also works for
segments (see \cite{Kra94}), so {\sc Segment} is also NP-hard but not known
to be in NP.  Indeed, the problem is unlikely to be in NP, since
it was shown to be hard for the class $\exists \mathbb{R}$, the
existential theory of the reals \cite{Mat14}, and it is widely assumed 
that $\exists \mathbb{R}$ is a strict superset of NP.  (See
e.g.~\cite{SchaeferBook} for more on $\exists \mathbb{R}$.)

For the purposes of displaying a graph via a segment representation,
we care not only about the existence, but also about the resolution
required to show it.  By moving endpoints slightly
one can always achieve rational coordinates at the endpoints 
without changing the represented graph.  After scaling 
therefore we may assume that endpoints have integer coordinates. 
We measure the {\em resolution of a segment representation} by the size
of the grid supporting all endpoints of the segments.

Assuming $\exists \mathbb{R}\neq$ NP, there are some graphs that
have a segment representation, but no segment-representation has
exponential resolution (i.e., $O(c^n)$ for some constant $c$, where
$n$ is the number of vertices).  
Namely, a segment-representation of exponential resolution
can be described using $O(poly(n))$ bits, and given such
a description, we can check in $O(poly(n))$ time whether this corresponds
to a segment representation of a graph.  If all graphs had a segment
representation of exponential resolution, therefore {\sc Segment}
would be in NP.

In his Ph.D.~thesis in 1984 \cite{Sch84}, Scheinermann famously asked whether
every planar graph (i.e., graph that can be drawn in the plane 
without crossing) has a segment representation.  This was proved
in 2009 by Chalopin and Gon\c{c}alvez \cite{CG09}; a second
(independent and much more accessible) proof was given by
Gon\c{c}alvez, Isenburg and Pennarun \cite{GIP17}.  In particular,
{\sc Segment} is {\em not} $\exists\mathbb{R}$-hard for planar
graphs (the answer is simply ``yes''), which raises the possibility
that they always have representations of exponential resolution.

Neither of \cite{CG09,GIP17} addressed the question of what resolution can
be achieved.   We show in this note
that the representation from \cite{GIP17} can be modified
to have resolution $4^n$.
Our result is not specific to the construction in \cite{GIP17},
but instead works for any graph that has an {\em $\L$-representation}.  This is
a string representation of a graph where every $\calC(v)$
has the shape of an $\L$, i.e., it consists of a horizontal and
a vertical segment that share their left/bottom endpoint.
It was shown by Middendorf and Pfeiffer \cite{MP92} that
every graph that has an $\L$-representation also has a
segment representation.  The approach of \cite{GIP17} is
hence to construct an $\L$-representation for any planar
graph and then to appeal to Middendorf and Pfeiffer's result.

It is not hard to show a bound of $O((2n)^{2n})$ on the resolution
achieved with the transformation by Middendorf and Pfeiffer (we
review this in Section~\ref{sec:LGamma}).  Our main result is
the following better bound:

\begin{theorem}
\label{thm:main}
Any graph that has an $\L$-representation (in particular therefore
any planar graph) has a segment representation with resolution $4^n$.
\end{theorem}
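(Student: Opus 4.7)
The plan is to follow the strategy of Middendorf and Pfeiffer, replacing each $\L$ in an $\L$-representation with a line segment of carefully chosen slope, but to pick the slopes in a dyadic family so that the resulting endpoints snap to an integer grid of size $4^n$ rather than the naive $(2n)^{2n}$ bound promised earlier.

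First, I would normalize the input. A standard compaction of an $\L$-representation lets me assume that all corners, right endpoints of horizontal arms, and top endpoints of vertical arms lie on an integer grid of size $O(n) \times O(n)$, with distinct coordinates in each direction. This is purely combinatorial and costs only a linear blow-up, which is negligible next to the target resolution.

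Next, I would order the $\L$s as $\L_1, \ldots, \L_n$ (say, by the $x$-coordinate of their corners) and assign to $\L_i$ a distinct slope $-s_i$ with $s_i = 2^{c_i}$ for distinct exponents $c_i \in \{1, \ldots, O(n)\}$. Each $\L_i$ is then replaced by a segment $S_i$ of slope $-s_i$ whose endpoints lie just below the right end of the horizontal arm of $\L_i$ and just to the right of the top of the vertical arm of $\L_i$; by construction $S_i$ tracks the shape of $\L_i$ inside an exponentially thin diagonal strip around it. The perturbations needed to land $S_i$ on the correct slope line are dyadic rationals whose denominators are powers of $2$ of exponent $O(n)$.

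The critical step is to verify that the intersection pattern of the $S_i$'s exactly reproduces that of the $\L_i$'s. For each crossing pair $\L_i, \L_j$, the geometry of $\L$s places the crossing in a small ``corner region'' that both replacement segments are forced to traverse, so $S_i$ and $S_j$ also cross there. For each non-crossing pair, the thin strips around the $\L$s remain disjoint as long as the slopes differ by at least a factor of two, which is exactly what the dyadic assignment guarantees. I expect this step to be the main obstacle: a careful case analysis over all combinatorial pair types is required (horizontal-meets-vertical, nested $\L$s, disjoint $\L$s with close corners, etc.), and it is the case of $\L$s with very disparate horizontal/vertical aspect ratios that forces the slopes to span an exponential range rather than a polynomial one.

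Finally, a scaling argument converts everything to integer coordinates. Since the slopes are dyadic with exponents bounded by $O(n)$ and the starting grid already had size $O(n)$, every coordinate of every $S_i$ is a rational whose denominator divides $2^{O(n)}$. Multiplying through produces an integer grid of size $O(n) \cdot 2^{O(n)}$, which is at most $4^n$ once the leading constants are absorbed into the exponent, giving the desired bound of Theorem~\ref{thm:main}.
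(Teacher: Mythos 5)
Your proposal has a genuine gap, and in fact the central construction as you describe it cannot work as stated. You ask for a segment $S_i$ of prescribed slope $-2^{c_i}$ whose two endpoints lie ``just below the right end of the horizontal arm'' and ``just to the right of the top of the vertical arm'' of $\L_i$, after only small dyadic perturbations of a configuration living in an $O(n)\times O(n)$ grid. But a straight segment through two prescribed points has its slope determined by those points: if the tips of $\L_i$ are $\Theta(n)$ apart in each coordinate, the connecting segment has slope equal to a ratio of two numbers of size $O(n)$, so it cannot equal $-2^{c_i}$ for any $c_i$ beyond $\log O(n)$. For the same reason a single constant-slope segment cannot ``track the shape of $\L_i$ inside an exponentially thin diagonal strip'': the diagonal of the bounding box of an $\L$ sits at distance comparable to the whole box from the $\L$ itself. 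The device that resolves this tension --- and that is absent from your writeup --- is to distort \emph{both coordinate axes exponentially}. The paper sends each integer coordinate $c$ to $2^c$, takes for $s(v)$ the downward diagonal of the rectangle $[0,2^r]\times[0,2^t]$ clipped to the halfspaces $\{x\geq 2^\ell\}$ and $\{y\geq 2^b\}$; then the slope $-2^{t-r}$ falls out automatically (it is tied to the aspect ratio of the bounding box, not to an arbitrary vertex ordering), the endpoints $(2^\ell,2^t-2^{t-(r-\ell)})$ and $(2^r-2^{r-(t-b)},2^b)$ are exactly integral in $\{1,\dots,2^{2n-1}\}$, and in log-scale the diagonal genuinely hugs the two arms of the (rotated) $\L$. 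No perturbation, rational arithmetic, or final rescaling is needed.

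The second problem is that the entire mathematical content of the theorem --- showing the segments intersect exactly when the $\L$s do --- is deferred: you note that ``a careful case analysis is required'' and that you ``expect this step to be the main obstacle,'' which is an accurate prediction but not a proof. Your stated criteria are also not the right ones. That non-crossing pairs stay apart ``as long as the slopes differ by at least a factor of two'' cannot be the test, since under your scheme any two slopes can be assigned to any non-adjacent pair, and separation must come from where the segments are clipped, not from the slope ratio; and the claim that crossing pairs meet in a common ``corner region'' does not apply when the vertical arm of one $\L$ crosses the horizontal arm of the other far from either corner or tip. The paper's two lemmas handle exactly these points: non-intersection follows by comparing the axis intercepts of the unclipped diagonals $d(v)$, $d(w)$ or by trapping the segments in disjoint halfspaces, and intersection follows by computing the crossing point $(x^*,y^*)$ of the two diagonals explicitly and verifying $x^*\geq 2^{r-1}\geq\max\{2^\ell,2^L\}$ and $y^*\geq 2^{T-1}\geq\max\{2^b,2^B\}$, so that the crossing survives the clipping. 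That calculation is where the exponential scaling earns its keep, and nothing in your outline substitutes for it. Finally, a bound of $O(n)\cdot 2^{O(n)}$ is not ``at most $4^n$'' unless the constant in the exponent is controlled; the paper's construction yields coordinates in $\{1,\dots,2^{2n-1}\}$ exactly.
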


%The proof uses essentially the same algorithm as was proposed by
%Middendorf and Pfeiffer, but rather than incrementally adding one
%more segment at a time, we use a ``one-shot approach'' where the
%coordinates of all segments are computed immediately from the
%coordinates of the input $\L$-representation.

%%%%%%%%%%%%%%%%%%%%%%%%%%%%%%%%%%%%%%%%%%%%%%%%%%%%%%%%%%%%%%%%%%%%%%%%
\section{Proof of Theorem~\ref{thm:main}}

Assume that we are given an $\L$-representation of a graph.
We may assume that no endpoint of an $\L$ lies on another $\L$,
else we can lengthen its segment a bit.
We may assume also that all vertices have a non-zero length horizontal 
and vertical segment, else we can insert a very short one.  
Whether two $\L$'s intersect depends only on
the relative coordinates of segments; we may
therefore reassign endpoint-coordinates to integers in $\{0,\dots,2n-1\}$
in sorted order (breaking ties arbitrarily) and obtain the same graph.
Hence assume from now on that all endpoints of all segments of $\L$'s
have distinct coordinates in $\{0,\dots,2n-1\}$. 

It will be convenient to rotate the picture by $180^\circ$,
so that  $\calC(v)$ becomes an 
{\rotatebox[origin=c]{180}{\L}}.
We describe $\calC(v)$ by giving the four coordinates
$\ell,b,r,t\in \{1,\dots,2n\}$ of its bounding box
$[\ell,r]\times [b,t]$; thus $\calC(v)$ consists of the
top and right side of this box.

To define $s(v)$, first let $d(v)$ by the segment from $(0,2^t)$ to $(2^r,0)$,
i.e., the downward diagonal in rectangle $[0,2^r]\times [0,2^t]$.
Observe that $d(v)$ has slope $-2^{t-r}$.
To obtain $s(v)$, intersect $d(v)$ with the halfspaces $\{x\geq 2^\ell\}$
and $\{y\geq 2^b\}$.  The endpoints of $s(v)$ are hence  
$(2^\ell,2^t-2^{t-(r-\ell)})$ and $(2^r-2^{r-(t-b)},2^b)$, which have 
integral coordinates in $1,\dots,2^{2n-1}$.
See also Figure~\ref{fig:example}.

\begin{figure}[ht]
%\begin{figure}[p]
\hspace*{\fill}
\includegraphics[width=0.85\textwidth]{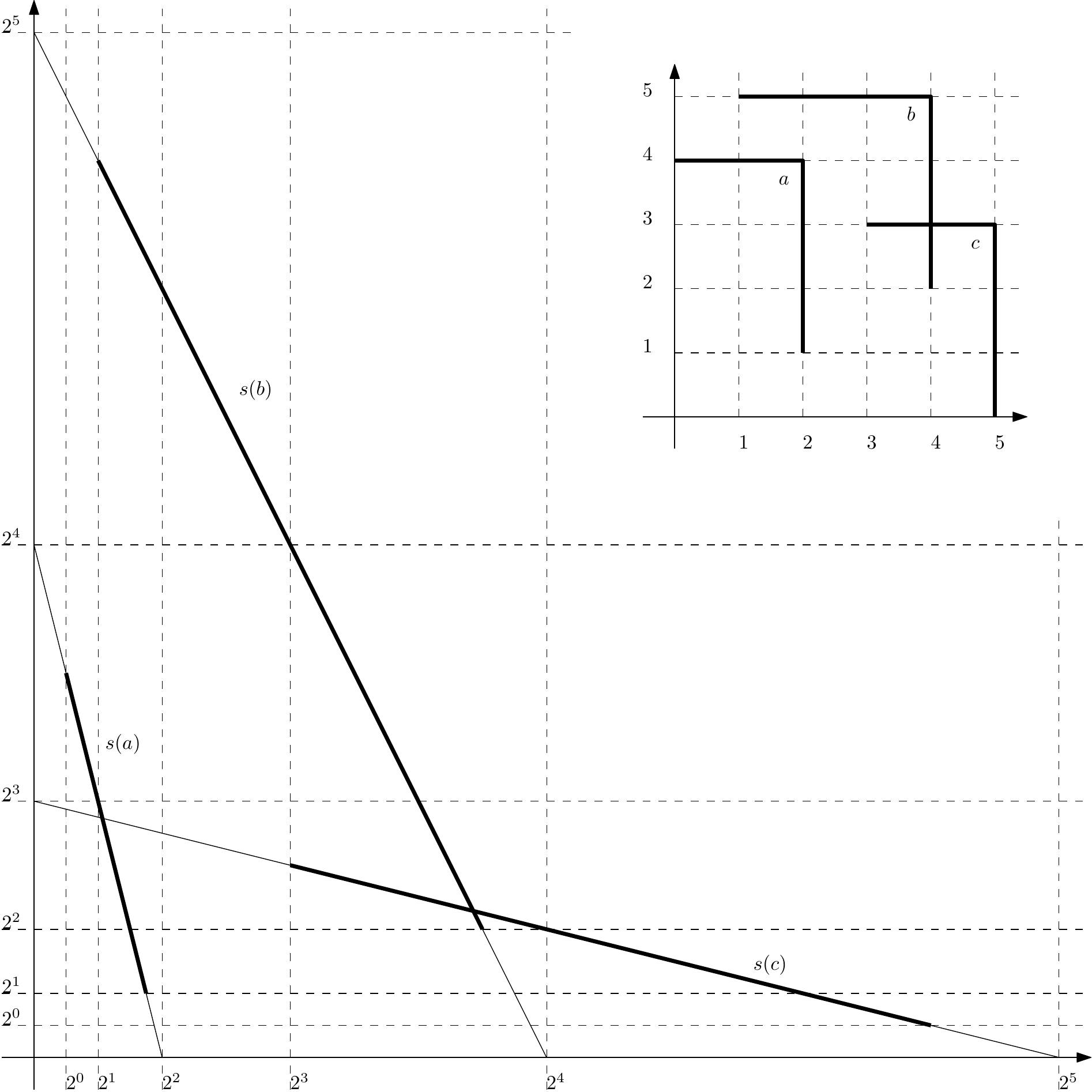}
\hspace*{\fill}
\caption{An $\L$-representation (rotated) and its corresponding segment representation.}
\label{fig:example}
\end{figure}

\begin{lemma}
\label{lem:intersect_L}
If $s(v)$ and $s(w)$ intersect, then $(v,w)$ is an edge.
\end{lemma}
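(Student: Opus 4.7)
The plan is to produce, from any common point of $s(v)$ and $s(w)$, a point lying on both $\calC(v)$ and $\calC(w)$; since the given $\L$-representation is valid, this forces $(v,w)$ to be an edge. Since all $r$-coordinates (and all $t$-coordinates) across the $\L$-representation are distinct, I would assume without loss of generality that $t(v)<t(w)$ and then aim to establish the three inequalities $r(w)<r(v)$, $\ell(v)<r(w)$, and $b(w)<t(v)$. Together with $t(v)<t(w)$ these place the point $(r(w),t(v))$ simultaneously on the top arm of $\calC(v)$ (since $\ell(v)<r(w)<r(v)$ at height $y=t(v)$) and on the right arm of $\calC(w)$ (since $x=r(w)$ and $b(w)<t(v)<t(w)$), completing the proof.

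The crux is the first inequality. To prove it I would use that any intersection point $(x_0,y_0)$ of $s(v)$ and $s(w)$ lies on both diagonals, so
\[
\frac{x_0}{2^{r(v)}}+\frac{y_0}{2^{t(v)}}=1=\frac{x_0}{2^{r(w)}}+\frac{y_0}{2^{t(w)}}.
\]
Subtracting yields $x_0\bigl(2^{-r(v)}-2^{-r(w)}\bigr)+y_0\bigl(2^{-t(v)}-2^{-t(w)}\bigr)=0$. The assumption $t(v)<t(w)$ makes the second coefficient strictly positive and $y_0>0$, so the first coefficient must be strictly negative; this forces $r(w)<r(v)$. This is the main (if brief) obstacle, as it is the one step that genuinely uses the exponential spacing built into $d(\cdot)$.

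The remaining two inequalities then fall out of the clipping constraints that define $s(v)$ and $s(w)$. The $d(w)$-equation together with $y_0>0$ gives $x_0<2^{r(w)}$, while $(x_0,y_0)\in s(v)\subseteq\{x\geq 2^{\ell(v)}\}$ gives $2^{\ell(v)}\leq x_0$; chaining yields $\ell(v)<r(w)$. The symmetric argument in the other coordinate, using the $d(v)$-equation to obtain $y_0<2^{t(v)}$ and the inclusion $s(w)\subseteq\{y\geq 2^{b(w)}\}$ to obtain $y_0\geq 2^{b(w)}$, yields $b(w)<t(v)$, which closes the argument.
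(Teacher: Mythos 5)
Your proof is correct and takes essentially the same route as the paper's: you derive the same three ordering inequalities (from the relative order of the intercepts of $d(v)$ and $d(w)$, and from the clipping halfspaces) and then exhibit the same kind of explicit crossing point of the two $\L$'s, merely arguing directly from an intersection point rather than by eliminating cases. One small quibble: the step yielding $r(w)<r(v)$ uses only that $c\mapsto 2^c$ is increasing, not the exponential spacing, so your aside about where the exponential growth is ``genuinely used'' is misplaced --- it is really needed only in the converse lemma.
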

\begin{proof}
Let $\calC(v)=\{\ell,b,r,t\}$ and $\calC(w)=\{L,B,R,T\}$.
After possible renaming we may assume $t>T$. 

Assume that $r>R$.  Walking along the $x$-axis, we encounter 
first $d(v)$ at $(2^r,0)$ and
then $d(w)$ at $(2^R,0)$.  Walking along the $y$-axis, we encounter
first $d(v)$ at $(0,2^t)$, and then $d(w)$ at $(0,2^T)$.  So the
order is the same, which means that line segments
$d(v)$ and $d(w)$ do not intersect, and neither can the
segments $s(v)$ and $s(w)$ that lie within them.
See e.g.~vertices $a$ and $b$ in Figure~\ref{fig:example}.

So we know that $r<R$ and $t>T$.  
If $r<L$ then $s(v)$ resides in the halfspace $\{x\leq 2^r\}$
and $s(w)$ resides in the halfspace $\{x\geq 2^L\}$ and they
do not intersect.    See e.g.~vertices $a$ and $c$ in
Figure~\ref{fig:example}.
If $b>T$, then similarly $s(v)$ resides in the halfspace $\{y\geq 2^b\}$
while $s(w)$ resides in the halfspace $\{y\leq 2^T\}$, and they
do not intersect.

So we know that $L<r<R$ and $b<T<t$.  But now consider 
point $(r,T)$.  This belongs to the vertical segment
of $v$ by $b<T<t$ and to the horizontal segment of $w$
by $L<r<R$.  So $\calC(v)$ and $\calC(w)$ intersect
and $(v,w)$ was an edge.
\end{proof}

\begin{lemma}
If $(v,w)$ is an edge, then $s(v)$ and $s(w)$ intersect.
\end{lemma}
\begin{proof}
Let $\calC(v)=\{\ell,b,r,t\}$ and $\calC(w)=\{L,B,R,T\}$.
After possible renaming we may assume 
that the intersection of the curves happens between
the vertical segment of $v$ and the horizontal segment of $w$.
See e.g.~vertices $b$ and $c$ in Figure~\ref{fig:example}.
In particular, $L<r<R$ and $b<T<t$, which by integrality means
$L+1\leq r \leq R-1$ and $b+1\leq T\leq t-1$.
This immediately implies
that $d(v)$ and $d(w)$ intersect (their order is different
along the $x$-axis and the $y$-axis).  The work is now to
show that this intersection point lies within the halfspaces
with which we pruned $d(v)$ and $d(w)$ to obtain $s(v)$ and 
$s(w)$.

We know that $d(v)$ has $y$-intercept $2^t$ and slope $-2^{t-r}$,
while $d(w)$ has $y$-intercept $2^T$ and slope $-2^{T-R}$.
If we let $(x^*,y^*)$ be the point of their intersection, we hence
have 
$ 2^t - 2^{t-r}x^* = y^* = 2^T - 2^{T-R}x^*,$ 
which implies
$$ x^* 
= \frac{2^t - 2^T}{2^{t-r}- 2^{T-r}}
\geq \frac{2\cdot 2^{t-1} - 2^{t-1}}{2^{t-r}} = 2^{r-1}.$$ 
So the intersection point lies within half-space $\{x\geq 2^{r-1}\}$,
which contains 
$\{x\geq 2^L\}$ since $r\geq L+1$
and 
$\{x\geq 2^\ell\}$ since $\ell<r$, hence $\ell\leq r-1$.

Symmetrically $y^*$ satisfies
$ 2^r - 2^{r-t} y^* = 2^R - 2^{R-T} y^*$
which shows that
$$ y^* = \frac{2^R - 2^r}{2^{R-T} - 2^{r-t}} \geq \frac{2^{R-1}}{2^{R-T}} = 2^{T-1}.$$ 
So the intersection point lies within half-spaces 
$\{y\geq 2^b\}$ and $\{y\geq 2^B\}$ by $T>B$ and $T>b$.

In consequence the intersection point of $d(v)$ and $d(w)$ lies
on both $s(v)$ and $s(w)$ and they intersect as desired.
\end{proof}

%%%%%%%%%%%%%%%%%%%%%%%%%%%%%%%%%%%%%%%%%%%%%%%%%%%%%%%%%%%%%%%%%%%%%%%%
\section{$\{\flipL,\L\}$-representations}
\label{sec:LGamma}

The result by Middendorf and Pfeiffer \cite{MP92} is more general
than stated before; they can create segment representations even
if the input representation is an $\{\flipL,\L\}$-representation, i.e., 
vertex-curves may also be horizontally reflected $\L$s.

We have not been able to generalize Theorem~\ref{thm:main}
to such representations, because its proof relies on that
we can distort both $x$-coordinates and $y$-coordinates
exponentially.  However, one can show that we can obtain
a segment representation of resolution $(2n)^{2n}$ (specifically,
its width is quite small at $2n$, but its height is $(2n)^{2n}$.

As before we will assume that the representation has distinct
coordinates in $\{0,\dots,2n-1\}$ and has been
rotated by 180$^\circ$.  So for any vertex $v$, 
$\calC(v)$ includes the top side of its
bounding box, and it includes either the left or the right side.

To define $s(v)$, set $Q(v)$ to be the rectangle $[\ell,r]\times [(2n)^b,(2n)^t]$,
and use as $s(v)$ the diagonal of $Q(v)$ that corresponds to the ends of
$\calC(v)$.  Put differently, if $\calC$ is 
${\rotatebox[origin=c]{180}{\L}}$ then $s(v)$ is the downward diagonal
of $Q(v)$, and 
if $\calC$ is ${\rotatebox[origin=c]{180}{\flipL}}$ then $s(v)$ is the upward
diagonal of $Q(v)$.  
See also Figure~\ref{fig:example_2}.
Clearly the ends of $s(v)$ are integer grid points
within the required range.
 
\begin{figure}[ht]
%\begin{figure}[p]
\hspace*{\fill}
\includegraphics[width=0.85\textwidth]{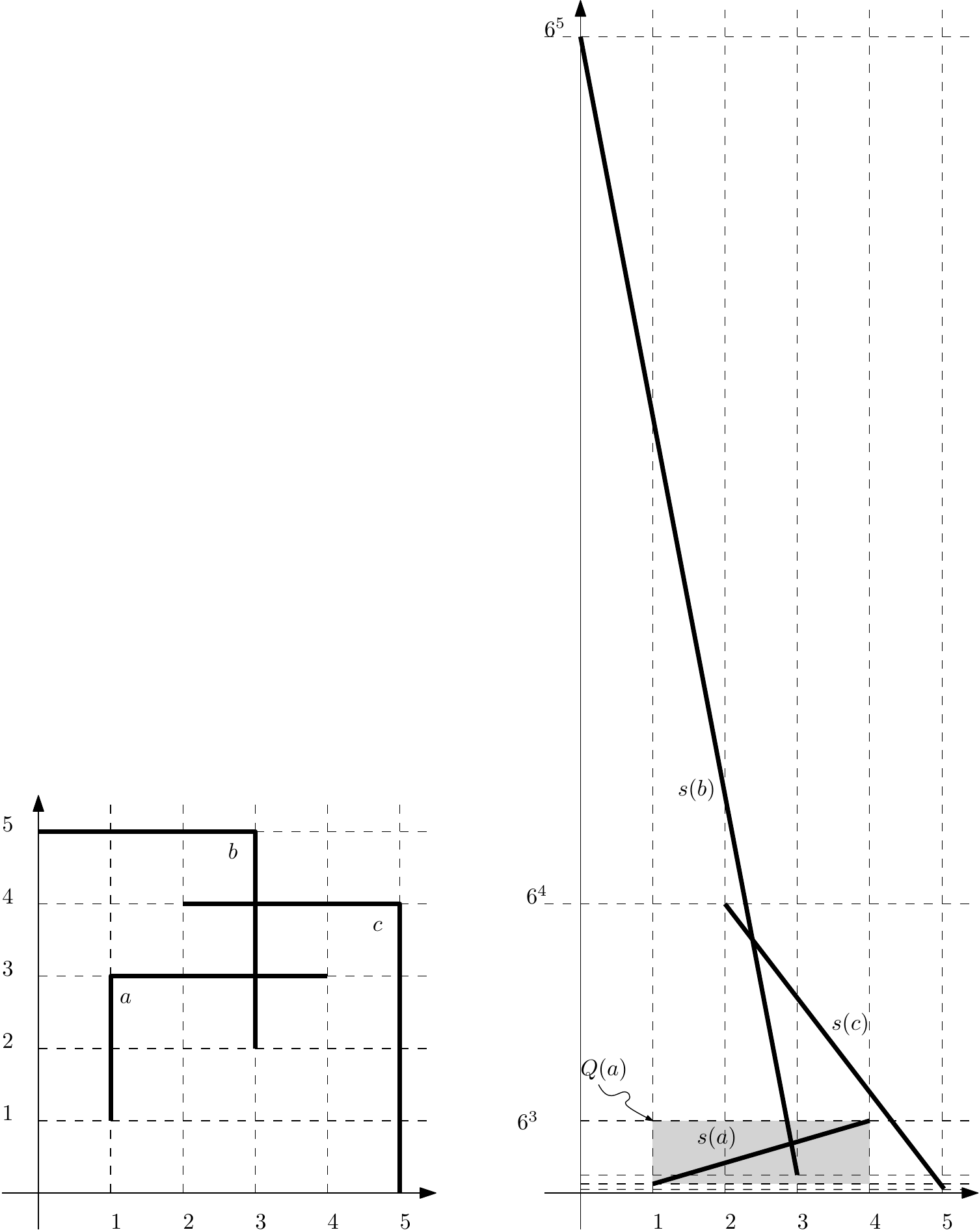}
\hspace*{\fill}
\caption{An $\{\L,\flipL\}$-representation (rotated) and its corresponding segment representation (not to scale).}
\label{fig:example_2}
\end{figure}

\begin{lemma}
\label{lem:intersect_Lflip}
If $s(v)$ and $s(w)$ intersect, then $(v,w)$ is an edge.
\end{lemma}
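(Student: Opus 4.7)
The plan is to proceed by the contrapositive, following the outline of the proof of Lemma~\ref{lem:intersect_L}. Assume $\calC(v) \cap \calC(w) = \emptyset$, write $\calC(v) = \{\ell,b,r,t\}$ and $\calC(w) = \{L,B,R,T\}$, and rename so that $t > T$. The goal is to show $s(v) \cap s(w) = \emptyset$.

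First I would discharge the easy separations. If $b > T$, then integrality gives $(2n)^b \geq (2n)^{T+1} > (2n)^T$, so $s(v)$ sits strictly above $s(w)$ in $y$. If the $x$-intervals $[\ell,r]$ and $[L,R]$ are disjoint, then $s(v)$ and $s(w)$ are separated in $x$. In either case there is nothing to prove, and I may assume $b < T < t$ together with $r \geq L$ and $R \geq \ell$.

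Under these assumptions, the only way for $\calC(v)$ to meet $\calC(w)$ is for the vertical segment of $\calC(v)$ to cross the top edge of $\calC(w)$ at height $y = T$ (the top-top and vertical-vertical intersections are ruled out by $t > T$ and distinct coordinates). This vertical segment sits at $x = r$ when $\calC(v)$ is an $\L$ and at $x = \ell$ when $\calC(v)$ is a $\flipL$; combined with the $x$-overlap, $\calC(v) \cap \calC(w) = \emptyset$ therefore forces either $r > R$ (in the $\L$ case) or $\ell < L$ (in the $\flipL$ case). Notably the shape of $\calC(w)$ plays no role in this dichotomy, so the four-by-shape case analysis collapses to two configurations.

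The main obstacle is the single quantitative estimate that handles both configurations: I must show that $s(v)$ evaluated at the boundary $x = R$ (respectively $x = L$) already reaches or exceeds $(2n)^T$, which is the largest possible $y$-value of $s(w)$. Once this is in hand, monotonicity of $s(v)$ on the overlap implies $s(v) \geq (2n)^T \geq s(w)$ throughout the overlap, and the two segments cannot meet. This is precisely where the base $(2n)$ in the definition $Q(v) = [\ell,r] \times [(2n)^b,(2n)^t]$ is calibrated: integrality gives $R \leq r-1$, so $(R-\ell)/(r-\ell) \leq 1 - 1/(r-\ell)$, and with $r-\ell \leq 2n-1$ together with $(2n)^t - (2n)^b \geq (2n-1)(2n)^{t-1}$, the linear-interpolation formula for $y_v$ at $x = R$ collapses to the bound $y_v(R) \geq (2n)^{t-1} \geq (2n)^T$, exactly as required. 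The symmetric estimate at $x = L$ in the $\flipL$ case is obtained by swapping $R \leftrightarrow L$ and the downward diagonal for the upward one, with no change in structure.
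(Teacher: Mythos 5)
Your proof is correct and follows essentially the same route as the paper's: the same reduction to the single configuration in which the vertical leg of $\calC(v)$ must overshoot the top edge of $\calC(w)$ (i.e.\ $r>R$ or $\ell<L$), and the same steep-slope estimate showing that $s(v)$ is already above $(2n)^T$ before it enters the $x$-range of $Q(w)$. One small point to tighten: keep the additive $(2n)^b$ term in your interpolation bound, giving $y_v(R)\geq (2n)^b+(2n)^{t-1}>(2n)^T$ strictly, since the chain $s(v)\geq (2n)^T\geq s(w)$ alone would not rule out the two segments touching at $(R,(2n)^T)$ when $s(w)$ is the upward diagonal.
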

\begin{proof}
Let $\calC(v)=\{\ell,b,r,t\}$ and $\calC(w)=\{L,B,R,T\}$.
After possible renaming we may assume $t>T$.
Note that the constructed representation is horizontally symmetric,
so up to symmetry, we may assume that $\calC(v)$ is an
${\rotatebox[origin=c]{180}{\L}}$ and $s(v)$ is the downward diagonal
of $Q(v)$.

If $r<L$ then $s(v)$ resides in the halfspace $\{x\leq r\}$
and $s(w)$ resides in the halfspace $\{x\geq L\}$ and they
do not intersect.    
If $b>T$, then similarly $s(v)$ resides in the halfspace $\{y\geq 2^b\}$
while $s(w)$ resides in the halfspace $\{y\leq 2^T\}$, and they
do not intersect.
So we know that $L<r$ and $b<T<t$.

The horizontal segment of $\calC(w)$ is $(L,T)-(R,T)$.
If $R>r$, then this intersects 
the vertical segment $(r,b)-(r,t)$ of $\calC(v)$,
and $(v,w)$ is an edge and we are done.

The only remaining case is hence $\max\{L,R\}<r$ and $b<T<t$.
Recall that $s(w)$ resides within $Q(w)=[L,R]\times [(2n)^B,(2n)^T]$.
We claim that $s(v)$ does not intersect $Q(w)$.  Namely, 
$s(v)$ (which is the downward diagonal) has slope
$$-\frac{(2n)^t-(2n)^b}{r-\ell} < - \frac{(2n)^t}{2n} = - (2n)^{t-1} \leq - (2n)^T,$$
which means that at $x$-coordinate $r-1$, the $y$-coordinate of $s(v)$
is at least $(2n)^b+(2n)^T > (2n)^T$.  Since $r-1\geq R$, this means
that $s(v)$ bypasses $Q(w)$ entirely, and in particular $s(v)$
does not intersect $s(w)$.
\end{proof}

\begin{lemma}
If $(v,w)$ is an edge, then $s(v)$ and $s(w)$ intersect.
\end{lemma}
\begin{proof}
Let $\calC(v)=\{\ell,b,r,t\}$ and $\calC(w)=\{L,B,R,T\}$.
After possible renaming we may assume 
that the intersection of the curves happens between
the vertical segment of $v$ and the horizontal segment of $w$.
Up to symmetry, we may assume that $\calC(v)$ is an
${\rotatebox[origin=c]{180}{\L}}$, so $s(v)$ is the downward diagonal.
This implies $L<r<R$ and $b<T<t$.  

Consider the two vertical lines $\calL_{r{-}1}:\{x=r{-}1\}$ and $\calL_r:\{x=r\}$; 
these intersect both $s(v)$ and $s(w)$ by $\ell<r$ and $L<r<R$.
As in the previous proof, one shows that $s(v)$ has $y$-coordinate $>(2n)^T$
at $x=r-1$, while $s(w)$ has $y$-coordinate $\leq (2n)^T$ throughout,
so $s(w)$ intersects $\calL_{r-1}$ below $s(v)$.

Segment $s(v)$ intersects $\calL_r$ at $y$-coordinate $(2n)^b$ and ends here.  
(See e.g.~vertex $b$ in Figure~\ref{fig:example_2}.)
If $s(w)$ is the downward diagonal, then it intersects $\calL_r$ at
$y$-coordinate 
$$ (2n)^B+ (R{-}r) \frac{(2n)^T-(2n)^B}{R{-}L} 
= \frac{r{-}L}{R{-}L}(2n)^B + \frac{R{-}r}{R{-}L} (2n)^T
> \frac{1}{2n} (2n)^T = (2n)^{T{-}1} \geq (2n)^b,$$
so it is above $s(v)$ at $\calL_r$.
(See e.g.~vertex $c$ in Figure~\ref{fig:example_2}.)
If $s(w)$ is the upward diagonal, then it intersects $\calL_r$ 
at $y$-coordinate 
$$ (2n)^T- (R{-}r) \frac{(2n)^T-(2n)^B}{R{-}L} 
%> (2n)^T (1- \frac{R{-}r}{R{-}L}) + \frac{R{-}r}{R{-}L}  (2n)^B
=  \frac{r{-}L}{R{-}L} (2n)^T 
+ \frac{R{-}r}{R{-}L}(2n)^B 
> \frac{1}{2n} (2n)^{T} \geq (2n)^b,$$
so again it is above $s(v)$ at $\calL_r$.
(See e.g.~vertex $a$ in Figure~\ref{fig:example_2}.)
Either way therefore, somewhere
in the $x$-range $[r{-}1,r]$ the two line segments intersect.
\end{proof}

\section{Conclusion}

While we have made progress on one interesting question (what resolution
is needed for segment representations of planar graphs?), we leave many
questions open for future study:
\begin{itemize}
\item Is exponential resolution ever needed for planar graphs?  Put
	differently, can we construct a planar graph $G$ such that
	any segment representation of $G$ requires resolution
	$\Omega(c^n)$, for some constant $c$?  Or can we achieve
	polynomial resolution?

	The problem is also interesting 
	for the superclass of ``graphs that have an $\L$-representation''.
\item Can we achieve exponential resolution for all graphs with
	an $\{\reflectbox{\L},\L\}$-representation?
\item Can we construct an explicit graph family that has segment
	representations, but no such segment representation has
	exponential resolution?
\item Does every graph with a segment representation have one of
	resolution $f(n)$, for some computable function $f(n)$?
\end{itemize}

%%%%%%%%%%%%%%%%%%%%%%%%%%%%%%%%%%%%%%%%%%%%%%%%%%%%%%%%%%%%%%%%%%%%%%%%
\bibliographystyle{plain}
\bibliography{journal,full,gd,papers}

%%%%%%%%%%%%%%%%%%%%%%%%%%%%%%%%%%%%%%%%%%%%%%%%%%%%%%%%%%%%%%%%%%%%%%%%

\end{document}